\newcommand{\Oh}[1]
	{\ensuremath{\mathcal{O}\!\left({#1}\right)}}
\newcommand{\Primes}
	{\ensuremath{\mathsf{Primes}}}
\newcommand{\true}
	{\ensuremath{\mathsf{true}}}
\newcommand{\false}
	{\ensuremath{\mathsf{false}}}
\begin{document}

\title{String Cadences}
\author{Amihood Amir\inst{1} \fnmsep \inst{2} \fnmsep \thanks{Partly supported by ISF grant 571/14.} \and
Alberto Apostolico\inst{3} \and \\
Travis Gagie\inst{4} \and
Gad M. Landau\inst{5} \fnmsep \inst{6} \fnmsep $^\star$}
\institute{Department of Computer Science, Bar-Ilan University, Ramat-Gan 52900, Israel. \email{E-mail: amir@cs.biu.ac.il}
\and Department of Computer Science, Johns Hopkins University, Baltimore, MD 21218.
\and School of Computational Science and Engineering, College of Computing, Georgia Institute of Technology, Klaus Advanced Computing Building, 266 Ferst Drive, Atlanta, GA 30332-0765. \email{Email: axa@cc.gatech.edu}
\and School of Computer Science and Telecommunications, Diego Portales University, Santiago, Chile. \email{E-mail: travis.gagie@mail.udp.cl}
\and Department of Computer Science, University of Haifa, Mount Carmel, Haifa 31905, Israel. \email{Email: landau@cs.haifa.ac.il}
\and Department of Computer Science and Engineering, NYU Polytechnic School of Engineering, 2 MetroTech Center, Brooklyn, NY 11201.}
\maketitle

\begin{abstract}
We say a string has a cadence if a certain character is repeated at regular intervals, possibly with intervening occurrences of that character.  We call the cadence anchored if the first interval must be the same length as the others.  We give a sub-quadratic algorithm for determining whether a string has any cadence consisting of at least three occurrences of a character, and a nearly linear algorithm for finding all anchored cadences.
\end{abstract}

\section{Introduction}
\label{sec:introduction}

Finding interesting patterns in strings is an important problem in many fields, with an extensive literature (see, e.g.,~\cite{Apostolico2003} and references therein).  One might therefore expect all obvious kinds to have already been considered; however, as far as we are aware, no one has previously investigated how best to determine whether a string contains a certain character repeated at regular intervals, possibly with intervening occurrences of that character.  To initiate the study of this natural problem, we introduce the following notions:

\begin{definition}
\label{def:general}
A {\em cadence} of a string \(S [1..n]\) is a pair \((i, d)\) of natural numbers such that \(i \leq d \leq n\) and \(S [j] = S[i]\) for every $j$ between 1 and $n$ with \(j \equiv i \bmod d\).  A cadence \((i, d)\) with \(\lfloor (n - i) / d \rfloor + 1 = k\) is a {\em $k$-cadence}.
\end{definition}

\begin{definition}
\label{def:anchored}
An {\em anchored cadence} of a string \(S [1..n]\) is a natural number $i$ such that \((i, i)\) is a cadence of $S$: i.e., such that \(S [j] = S [i]\) for every $j$ between 1 and $n$ with \(j \equiv 0 \bmod i\).
\end{definition}

Informally, if the first interval must be the same length as the others then the cadence is anchored.  For example, if \(S = \mathtt{ALABARALAALABARDA}\) then \((3, 7)\) is a 3-cadence, because \(S [3] = S [10] = S [17] = \mathtt{A}\), and 7 is an anchored cadence, because \(S [7] = S [14] = \mathtt{A}\).  

If \(d \geq \max (i, n - i + 1)\) then \((i, d)\) is trivially a 1-cadence; otherwise, we can check that \((i, d)\) is a cadence by comparing \(S [i], S [2 i], \ldots, S [i + \lfloor (n - i) / d \rfloor d]\).  Therefore, we can find all cadences of $S$ in time at most proportional to
\[\sum_{i = 1}^n \sum_{d = i}^n n / d = n \sum_{d = 1}^n \sum_{i = 1}^d 1 / d = n^2\,.\]
Since there can still be \(\Theta (n^2)\) $k$-cadences for \(k \geq 2\), this bound is worst-case optimal.  In Section~\ref{sec:3-cadences}, however, we give a sub-quadratic algorithm for determining whether a string has a 3-cadence.  In Section~\ref{sec:anchored} we give an $\Oh{n \log \log n}$ time algorithm for finding all anchored cadences, of which there can be at most $n$.

We leave as open problems finding an output-sensitive algorithm for reporting all the $k$-cadences for \(k \geq 2\), or a subquadratic algorithm for determining whether there is a $k$-cadence for a given \(k \geq 4\), or a linear algorithm for finding all anchored cadences.  We are also curious about how to define properly and find efficiently approximate cadences, and whether there exists, e.g., a subquadratic-space data structure that, given the endpoints of a substring, quickly reports all the cadences of that substring.

\section{Detecting 3-Cadences}
\label{sec:3-cadences}

A string has a 2-cadence if, for some character $a$, the positions $i$ and $j$ of the leftmost and rightmost occurrences of $a$ satisfy \(2 i \leq j\) and \(2 j - i > n\).  We can easily check this in $\Oh{n \log n}$ time.  In this section we show that when the string is binary, we can also check if it has a 3-cadence in $\Oh{n \log n}$ time.  It follows that we still need only $\Oh{n \log n}$ time when the string is over any constant-size alphabet, and $\Oh{n^{3 / 2} \log^{1 / 2} n}$ time in general.

Specifically, we show how to convert a string \(S [1..n] \in \{0, 1\}^n\) into a set $W$ of $n$ integer weights in \([- 2 n, \ldots, 2 n]\) such that $S$ has a 3-cadence \((i, d)\) with \(S [i] = 1\) if and only if $W$ is in {\sc 3Sum} (i.e., three of its weights sum to 0), which we can check in $\Oh{n \log n}$ time via the Fast-Fourier Transform (see, e.g.,~\cite[Exercise 30.1--7]{CormenLeisersonRivestStein2001}).  Since our reduction is essentially reversible, we suspect that improving this $\Oh{n \log n}$ bound will be challenging.

Without loss of generality, assume we are interested only in detecting 3-cadences \((i, d)\) with \(S [i] = 1\); we can detect 3-cadences \((i, d)\) with \(i = 0\) symmetrically.  Let
\begin{eqnarray*}
L_1 & = & \{j\,:\,S [j] = 1, j \leq n / 3\}\\
L_2 & = & \{j\,:\,S [j] = 1, n / 3 < j \leq 2 n / 3\}\\
L_3 & = & \{j\,:\,S [j] = 1, 2 n / 3 < j\}\,.
\end{eqnarray*}
By definition, for any 3-cadence \((i, d)\) we have \(i \in L_1\), \(i + d \in L_2\) and \(i + 2 d \in L_3\).  Therefore, there is a 3-cadence if and only if the average of some element in $L_1$ and some element in $L_3$ is an element in $L_2$.  It follows that $S$ has a 3-cadence if and only if
\[L_1 \cup L_3 \cup \{- 2 j\,:\,j \in L_2\} \in \mbox{\sc 3Sum}\,.\]

If $S$ is over any constant-size alphabet then we can create a binary string of length $n$ for each character in the alphabet, with the 1s marking the occurrences of that character, and apply our reduction to each one in $\Oh{n \log n}$ total time.  If $S$ is over an arbitrary alphabet then we can perform this partitioning and detect 3-cadences in the binary string for each character $a$ using time at most proportional to \(\min \left( n_a^2, n \log n \right)\), where $n_a$ is the number of occurrences of $a$ in $S$.  This takes a total of at most $\Oh{n^{3 / 2} \log^{1 / 2} n}$ time.

\begin{theorem}
\label{thm:3-cadences}
We can determine whether a string of length $n$ has a 3-cadence in $\Oh{n^{3/2} \log^{1/2} n}$ time.  If the alphabet has constant size then we use $\Oh{n \log n}$ time.
\end{theorem}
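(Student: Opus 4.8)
The plan is to read off the theorem from the reduction constructed above, the standard FFT-based test for {\sc 3Sum}, and a short summation estimate for the general alphabet. For a binary string $S\in\{0,1\}^n$ I would build the weight set $W=L_1\cup L_3\cup\{-2j:j\in L_2\}$ from the occurrences of~$1$ and, symmetrically, a set $W'$ from the occurrences of~$0$; each has at most $n$ elements, all lying in $\{-2n,\dots,2n\}$, and by the reduction already established $S$ has a $3$-cadence iff $W\in\mbox{\sc 3Sum}$ or $W'\in\mbox{\sc 3Sum}$. As noted above, membership in {\sc 3Sum} for such a bounded set is decided in $\Oh{n\log n}$ time via the Fast Fourier Transform: encode the set as a $0/1$-coefficient polynomial, cube it, and look for a nonzero coefficient at the exponent encoding the sum~$0$; since the coefficients of the cube stay $\Oh{n^2}$, there is no precision difficulty. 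Doing this for $W$ and for $W'$ settles the binary case in $\Oh{n\log n}$ time.

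For a constant-size alphabet I would, for each of the $\Oh{1}$ distinct characters~$a$, run the binary procedure on the $\{0,1\}$-string marking the occurrences of~$a$; the total is again $\Oh{n\log n}$. For an arbitrary alphabet I would process each character~$a$ with $n_a$ occurrences by running whichever is faster of (i) the brute-force check restricted to the $n_a$ positions holding~$a$, which takes $\Oh{n_a^2}$ time, and (ii) the $\Oh{n\log n}$ {\sc 3Sum} procedure, thereby spending $\Oh{\min(n_a^2,n\log n)}$ time on~$a$. It remains to bound $\sum_a\min(n_a^2,n\log n)$ using $\sum_a n_a=n$: the characters with $n_a\le\sqrt{n\log n}$ contribute $\sum n_a^2\le\sqrt{n\log n}\,\sum n_a\le n^{3/2}\log^{1/2}n$, and there are at most $n/\sqrt{n\log n}=\sqrt{n/\log n}$ characters with $n_a>\sqrt{n\log n}$, which contribute at most $\sqrt{n/\log n}\cdot n\log n=n^{3/2}\log^{1/2}n$; hence the general case takes $\Oh{n^{3/2}\log^{1/2}n}$ time.

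The real content of the theorem is the reduction to {\sc 3Sum} and its correctness, and that is where I would expect to spend the effort --- in particular, the zone boundaries $n/3$ and $2n/3$ are chosen precisely so that (forward) $i\le d\le n$ together with $i+2d\le n<i+3d$ forces $i\in L_1$, $i+d\in L_2$, $i+2d\in L_3$, and (backward) the three parts $L_1\subseteq(0,n/3]$, $L_3\subseteq(2n/3,n]$, and $\{-2j:j\in L_2\}\subseteq[-4n/3,-2n/3)$ are separated enough that any zero-sum triple must take exactly one element from each, so that a genuine $3$-cadence can be reconstructed (and, incidentally, the three elements are distinct, so the FFT test is not confused by repetitions). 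Given that reduction, which is established in the discussion above, the theorem is a packaging exercise: the only step calling for an actual estimate is the bound $\sum_a\min(n_a^2,n\log n)=\Oh{n^{3/2}\log^{1/2}n}$ above, and the polynomial cubing and the per-character dispatch are routine.
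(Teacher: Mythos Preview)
Your proposal is correct and follows essentially the same approach as the paper: the $L_1,L_2,L_3$ reduction to {\sc 3Sum}, the FFT-based $\Oh{n\log n}$ test, per-character processing for constant alphabets, and the $\min(n_a^2,n\log n)$ choice for arbitrary alphabets. You go a bit further than the paper by spelling out the threshold argument at $\sqrt{n\log n}$ that justifies $\sum_a\min(n_a^2,n\log n)=\Oh{n^{3/2}\log^{1/2}n}$, and by sketching why a zero-sum triple in $W$ must draw one element from each of the three parts; the paper simply asserts both.
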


To see that our reduction is essentially reversible, suppose we want to determine whether a set $W$ of $n$ integer weights in \([-n, \ldots, n]\) is in {\sc 3Sum}.  Without loss of generality, we need check only whether there exist two positive weights and one negative weight that sum to 0: the case when there are two negative weights and one positive weight that sum to 0 is symmetric.  For each positive weight $w$ we set \(S [2 w]\) to 1; for each negative weight $w$ we set \(S [- w]\) to 1; and for each remaining \(i \leq n\) we set \(S [i]\) to 0 if \(i \leq n / 2\) and to 2 otherwise.  There are two positive weights $w_1$ and $w_3$ and one negative weight $w_2$ in $W$ with \(w_1 + w_2 + w_3 = 0\) if and only if \(- w_2\) is the average of \(2 w_1\) and \(2 w_3\), in which case $S$ has a 3-cadence.

\section{Finding Anchored Cadences}
\label{sec:anchored}

We can check whether \(i \leq n\) is an anchored cadence of \(S [1..n]\) by comparing \(S [i], S [2 i], \ldots, S [\lfloor n / i \rfloor i]\), which takes $\Oh{n / i}$ time.  Since
\[\sum_{i \leq n} 1 / i = \Oh{\log n}\,,\]
obviously we can find all anchored cadences in $\Oh{n \log n}$ time.  In this section we use the following lemma to improve this bound to $\Oh{n \log \log n}$.

\begin{lemma}
\label{lem:anchored}
If a natural number \(i \leq n\) is not an anchored cadence of a string \(S [1..n]\), then for some prime $p$ with \(p\,i \leq n\), either \(S [i] \neq S [p\,i]\) or \(p\,i\) is not an anchored cadence of $S$.
\end{lemma}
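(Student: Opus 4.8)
The plan is to pick the \emph{smallest} multiplier that witnesses the failure of the anchored-cadence property, and then strip off a prime factor of it. Suppose \(i \le n\) is not an anchored cadence of \(S\). By Definition~\ref{def:anchored} there is a multiple of \(i\) that is at most \(n\) on which \(S\) disagrees with \(S[i]\); since the multiple \(j = i\) itself trivially satisfies \(S[j] = S[i]\), every such witness has the form \(m\,i\) with \(m \ge 2\) and \(m\,i \le n\). Let \(m\) be the least integer with this property, and let \(p\) be a prime divisor of \(m\), so \(m = p\,m'\) for some \(m' \ge 1\).

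First I would handle the case \(m' = 1\), i.e. \(m = p\) is prime: then \(p\,i = m\,i \le n\) and \(S[p\,i] = S[m\,i] \neq S[i]\), so the first alternative of the lemma holds. Otherwise \(m' \ge 2\), hence \(2 \le p < m\). Now observe: (i) \(p\,i \le m\,i \le n\); (ii) since \(2 \le p < m\) and \(m\) was chosen minimal, \(p\,i\) cannot itself be a mismatch witness, so \(S[p\,i] = S[i]\); and (iii) \(m'\,(p\,i) = m\,i \le n\) is a multiple of \(p\,i\). If \(p\,i\) were an anchored cadence, then applying Definition~\ref{def:anchored} to \(p\,i\) with the multiple \(m'\,(p\,i)\) would force \(S[m\,i] = S[p\,i] = S[i]\), contradicting the choice of \(m\). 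Hence \(p\,i\) is not an anchored cadence of \(S\), which is the second alternative of the lemma. In either case \(p\,i \le n\), so we are done.

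The only point requiring care is the bookkeeping around minimality: in the composite case we need \(p\) to lie strictly between \(2\) and \(m\), so that minimality of \(m\) genuinely forces \(S[p\,i] = S[i]\); this is exactly what \(m' \ge 2\) guarantees. I do not expect any deeper obstacle — once the right quantity (the least mismatch multiplier \(\ge 2\)) is isolated, the argument is a short case split on whether that multiplier is prime.
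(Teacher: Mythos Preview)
Your proof is correct and uses essentially the same minimality idea as the paper. The only cosmetic difference is that the paper minimizes the multiplier over the full disjunction (\(S[i] \neq S[k\,i]\) \emph{or} \(k\,i\) is not an anchored cadence) and then argues that this minimal multiplier is itself prime, whereas you minimize only over the mismatch condition \(S[m\,i] \neq S[i]\) and then pull out a prime factor of \(m\) with a short case split; the two arguments are interchangeable.
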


\begin{proof}
Let \(p\,i\) be the smallest multiple of $i$ greater than $i$ itself such that either \(S [i] \neq S [p\,i]\) or \(p\,i\) is not an anchored cadence; \(p\,i\) must exist or $i$ would be an anchored cadence.  To see why $p$ must be prime, assume it has a prime factor \(r < p\): then \(S [i] = S [p\,i / r]\) and \(p\,i / r\) is an anchored cadence, by our choice of $p$, meaning \(S [i] = S [p\,i / r] = S [p\,i]\) and \(S [p\,i]\) is an anchored cadence, contradicting our choice of $p$. \qed
\end{proof}

We start by computing the sorted set \(\Primes (n)\) of primes between 1 and $n$, which takes $\Oh{n / \log \log n}$ time~\cite{Pritchard1994}.  We then build a Boolean array \(B [1..n]\) and set \(B [i]\) to $\true$ for \(\lceil n / 2 \rceil \leq i \leq n\).  For $i$ from \(\lceil n / 2 \rceil - 1\) down to 1, we check whether \(S [i] = S [p\,i]\) and \(B [p\,i] = \true\) for each prime $p$ with \(p\,i \leq n\), in increasing order; if so, we set \(B [i]\) to $\true$ and otherwise we set it to $\false$.  By Lemma~\ref{lem:anchored}, $i$ is an anchored cadence if and only if we eventually set \(B [i]\) to $\true$.  We check each cell \(B [j]\) at most once for each of the distinct prime factors of $j$.  Since each prime \(p \leq n\) divides \(\lfloor n / p \rfloor\) numbers between 1 and $n$ and
\[\sum_{p \in \Primes (n)} 1 / p = \ln \ln n + \Oh{1}\]
(see, e.g.,~\cite{BachShallit1996}), we use $\Oh{n \log \log n}$ time.

If $m$ is the smallest anchored cadence of $S$ then, for \(i < m / 2\), we set \(B [i]\) to $\false$ immediately after checking \(B [2 i]\).  Therefore we check each cell \(B [j]\) at most once for each prime factor $p$ of $j$ such that \(p = 2\) or \(j / p \geq m / 2\).  For each prime \(p \leq 2 n / m\) there are \(n / p - m / 2 + \Oh{1}\) choices of $j$ between 1 and $n$ such that $p$ divides $j$ and \(j / p \geq m / 2\); there is no such choice of $j$ for any larger prime.  It follows that we use time $\Oh{n \log \log (n / m)}$ time.  Moreover, if $S$ is chosen uniformly at random from over a non-unary alphabet then in the expected case, for \(i \leq n\), we check at most 2 primes before finding one $p$ such that \(S [i] \neq S [p\,i]\).  In the average case, therefore, we use $\Oh{n}$ total time.

\begin{theorem}
\label{thm:anchored}
We can find all the anchored cadences in a string of length $n$ in $\Oh{n \log \log (n / m)}$ time, where $m$ is the smallest cadence.  If the alphabet is non-unary then we use $\Oh{n}$ time on average.
\end{theorem}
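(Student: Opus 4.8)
The plan is to confirm that the algorithm already described is correct and runs within the stated bounds. Recall the algorithm: sieve the sorted set \(\Primes(n)\) in \(\Oh{n/\log\log n}\) time, initialise \(B[i] \leftarrow \true\) for every \(i > n/2\) (these positions have no proper multiple in \([1,n]\) and so are vacuously anchored cadences), and then, for \(i = \lfloor n/2\rfloor\) down to \(1\), set \(B[i] \leftarrow \true\) exactly when \(S[i] = S[p\,i]\) and \(B[p\,i] = \true\) for every prime \(p\) with \(p\,i \le n\), trying the primes in increasing order. Correctness I would establish by downward induction on \(i\): the base cases \(i > n/2\) are handled by the initialisation, and for \(i \le n/2\), granting that \(B[p\,i]\) already records correctly whether \(p\,i\) is an anchored cadence for each prime \(p\) with \(p\,i \le n\), Lemma~\ref{lem:anchored} says \(i\) fails to be an anchored cadence precisely when some such \(p\,i\) either disagrees with \(S[i]\) or is itself not an anchored cadence --- which is exactly the situation in which the algorithm sets \(B[i] \leftarrow \false\). (The reason testing prime multipliers suffices is Lemma~\ref{lem:anchored}: any multiple of \(i\) is obtained from \(i\) by successive multiplications by primes, and the lemma guarantees that the first multiplication which can spoil \(i\) is by a prime.)

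For the running time I would first note that a cell \(B[j]\) is inspected at most once per distinct prime factor of \(j\), namely while \(B[j/p]\) is being processed and the prime \(p\) is tried; hence the total number of inspections is at most \(\sum_{p \in \Primes(n)} \lfloor n/p\rfloor\), and since \(\sum_{p \in \Primes(n)} 1/p = \ln\ln n + \Oh 1\) this is \(\Oh{n\log\log n}\), giving the trivial bound after adding the \(\Oh{n/\log\log n}\) sieving cost. The improvement to \(\Oh{n\log\log(n/m)}\), with \(m\) the smallest anchored cadence of \(S\), is the heart of the matter and the step I expect to be the obstacle. The key observation is that if \(i < m/2\) then \(2i < m\), so \(2i\) is not an anchored cadence and \(B[2i] = \false\) has already been recorded by the time cell \(i\) is processed; thus the very first prime tried, \(p = 2\), immediately forces \(B[i] \leftarrow \false\) and nothing else is inspected on behalf of cell \(i\). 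Consequently \(B[j]\) is inspected via a prime factor \(p > 2\) only when \(j/p \ge m/2\), which forces \(p \le 2n/m\), and for each such \(p\) there are at most \(\lfloor n/p\rfloor - \lceil m/2\rceil + 1\) eligible values of \(j\). Adding the \(\Oh n\) inspections ever made via \(p = 2\), the total is at most \(\Oh n + \sum_{p \le 2n/m}\!\bigl(n/p - m/2 + \Oh 1\bigr) = \Oh n + n\ln\ln(2n/m) + \Oh n = \Oh{n\log\log(n/m)}\): the restriction of the Mertens sum to primes at most \(2n/m\) is exactly what caps the \(\log\log\), while the subtracted \(\lceil m/2\rceil\) terms only help.

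For the average case, fix \(i\); the positions \(i, 2i, 3i, \dots\) are distinct, so over a string drawn uniformly at random from a non-unary alphabet of size \(\sigma \ge 2\) the comparisons \(S[i] = S[2i]\), \(S[i] = S[3i]\), \(\dots\) succeed independently, each with probability \(1/\sigma \le 1/2\). Hence the number of primes tried while processing cell \(i\) is stochastically dominated by a geometric random variable of mean \(\sigma/(\sigma - 1) \le 2\), and summing over the \(\Oh n\) processed cells --- plus the \(\Oh{n/\log\log n}\) sieve --- yields \(\Oh n\) expected time. The hardest part is the \(\Oh{n\log\log(n/m)}\) estimate: specifically, isolating the early-termination phenomenon at \(p = 2\) for \(i < m/2\) and checking that it confines the surviving inspections to primes \(p \le 2n/m\), which is precisely what turns \(\log\log n\) into \(\log\log(n/m)\). \qed
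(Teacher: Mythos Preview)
Your proof is correct and follows essentially the same approach as the paper: the same algorithm, the same use of Lemma~\ref{lem:anchored} for correctness, the same Mertens-type sum for the \(\Oh{n\log\log n}\) bound, the same early-termination observation at \(p=2\) for \(i < m/2\) to restrict the surviving primes to \(p \le 2n/m\), and the same geometric-mean argument for the average case. You spell out the downward induction and the stochastic domination a bit more explicitly than the paper does, but the ideas are identical.
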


\section*{Acknowledgments}

Many thanks to the organizers and other participants of the AxA workshop.

\end{document}